\newtheorem{theorem}{Theorem}
\theoremstyle{plain}
\theoremstyle{definition}
\theoremstyle{remark}
\title{Value-at-Risk:\\
 The Effect of Autoregression in a Quantile Process\\}
\author{Khizar Qureshi \thanks{Generous amounts of feedback from: \newline Professor Alfred Galichon, Professor Victor Chernozhukov \newline Department of Economics, \newline Massachusetts Institute of Technology}}
\affil{Massachusetts Institute of Technology \footnote{Department of Mathematics, Department of Chemical Engineering, Year: 2016 \\ }}
\date{\today}
\begin{document}
\onehalfspacing 
\maketitle

\begin{abstract}
Value-at-Risk (VaR) is an institutional measure of risk favored by financial regulators. VaR may be interpreted as a quantile of future portfolio values conditional on the information available, where the most common quantile used is $95 \%$. Here we demonstrate Conditional Autoregressive Value at Risk, first introduced by Engle, Manganelli (2001). CAViaR suggests that negative/positive returns are not i.i.d., and that there is significant autocorrelation. The model is tested using data from 1986-1999 and 1999-2009 for GM, IBM, XOM, SPX, and then validated via the dynamic quantile test. Results suggest that the tails (upper/lower quantile) of a distribution of returns behave differently than the core. 
\end{abstract}

\section*{Introduction}
Several recent financial disasters have made clear the necessity for a diverse set of risk management tools. Traditional models of risk management often rely on trivial probabilistic tools and often fail to relax key assumptions for the underlying statistics. An effective tool for risk management should be a withstanding measure of uncertainty robust to a large set of situations. Moreover, it should be suitable for its users, adaptive to complex situations, and compatible for various sample sizes. Risk management is not simply a tool to establish the upper bound on a loss, but also a preventative measure that should lead to the development of an informed decision-making process.
\paragraph{}
Perhaps the most well known tool for risk management amongst finance practitioners is Value at Risk (VaR). Conceptually, VaR measures the supremum of a portfolio's loss with a particular level of confidence. Consider a portfolio of unitary value with an annual standard deviation of  $15 \%$. The $95 \%$ daily VaR is simply the product of the daily standard deviation and the total value, or $\$1 \times 2.35 \times 0.15 = \$0.3525$. A $95 \%$ daily VaR of 0.3525 means that, if our day were hypothetically conducted an infinite number of times, the loss of our one dollar portfolio would be greater than 0.3525 with probability 0.05.

We say with $95 \%$ confidence that on a given day, the maximum loss implied by VaR is 0.3525 for the one dollar portfolio. 
\paragraph{}
A more rigorous definition of VaR is a particular quantile of future portfolio values, conditional on current information. In particular, we say that $P \left( y_t <  VaR_t | \Omega_{t-1} \right) = \alpha$, where $y_t$ is a time t return, $\Omega$ is the set of available information in a weak sense, and $\alpha$ is the confidence level or probability. The immediate considerations for a functional model include a closed-form representation, a set of well-defined intermediary parameters, and a test to validate the proposed model. In advance of our model proposition(s), we will review and evaluate existing models for VaR.
\paragraph{}
The remainder of the paper is organized as follows. Section II will introduce and evaluate existing models for Value at Risk, all of which will guide us in constructing CAViaR. Section III will cover the notion of Conditional Autoregression, the understanding of which is critical to realistic non-i.i.d processes. Section IV will introduce various methods of testing quantile regression, which will enable us to compare the set of well-known models with ours. Section V will focus on the empirical test of CAViaR on IBM, GM, and SPX time series data. We will conclude with section VI. 
\section*{Existent Models}
VaR has become a quintessential tool for portfolio management because it enables funds to estimate  the cost of risk and efficiently allocate it. Moreover, a growing number of regulatory committees now requires  institutions to monitor and  report VaR  frequently. Such a measure discourages excessive leverage and increases transparency of the "worst-case  scenario". While VaR , methods of estimation vary across both markets and firms. For ease and convenience of terminology, we will refer to all institutions and funds concerned with monitoring VaR as "holders".
\paragraph*{}
One component that often varies between different VaR models is the method by which the distribution of portfolio returns is estimated. A rudimentary example was readily introduced at the beginning of the paper, in which returns were assumed to be independently and identically distributed, or i.i.d. As we will see, however, returns almost never follow a martingale process, but rather, are Markov. A portfolio's performance on any given day almost always effects the performance on subsequent days. Thus, the probability of observing a specific return or variance as an event is dependent on the probability of observing the same event one period prior. The calculation of returns falls within two categories:
\begin{enumerate}
\item {\bf Factor Models:} Here, a universe of assets are studied for their factors, all of which are correlated. Thus, the time variation in the risk of the portfolio is derived from the volatility of the correlations. A well known example is the Fama-French four-factor model. The approach, however, assumes that negative returns follow the same approach as non-negative returns. Perhaps an even more alarming assumption by such models is the homoscedasticity between returns per unit risk.
\item {\bf Portfolio Models:} Here, VaR is instantaneously constructed using statistical inference of past portfolios. Then, quantiles are forecast via several approaches, including Generalized Autoregressive Conditional Heteroscedasticity (GARCH), exponential smoothing, etc., most of which incorrectly assume normality. Moreover, the set of models assume that after a certain amount of time, a particular historical return has probability zero of recurring. 
\end{enumerate}
We can deduce without an empirical demonstration that that portfolio models will underestimate VaR after time T, and factor models will fail to account for autoregression. Interestingly enough, the last decade has motivated the introduction of extreme quantile estimation, and the notion of asymptotic tail distributions. Many of these models, however, are only representative of especially low ($< 1 \%$) quantiles, and do not relax the weak i.i.d. assumption.
\section*{Conditional Autoregressive Value-at-Risk}
We now address many of the concerns above with CAViaR. In particular, we will study the asymptotic distribution, account for autocorrelation, and do so under various regimes. Suppose that there exists an observable vector of returns, $\{y_t \}_{t=1}^T$. We denote $\theta$ as the probability associated with the Value-at-Risk. Letting $x_t$ be a vector of time t observable variables (i.e. returns), and $\beta$ be a vector of unknown parameters, a Conditional Autoregressive VaR model may take the following form:

\begin{equation}
f_t (\beta) = \beta_0 + \sum_{i=1}^q \beta_i f_{t-i} (\beta) + \sum_{j=1}^r \beta_j l(x_{t-j})
\end{equation}
{\bf Interpretation:}\\
The quantile of portfolio returns at a time t is a function of not only the past period returns, but also the past period quantile of returns. That is, $f_t(\beta) = f_t(x_{t-1}, \beta_{\theta}$. The lag operator in the third term, l, links the set of available information at ${t-j}$ to the quantile of returns at t. The autoregressive function, $f_{t-1} (\beta)$, creates a smooth path between time-oriented quantiles. The first term, $\beta_0$, is simply a constant. The example provided at the beginning of the paper, which does not account for autoregression, would simply remain a constant: $f_t(\beta) = \beta_0$. Now that we have developed an understanding of the basic form for a CAViaR model, we will explore a few examples.
\subsection*{Adaptive Model}
In general, an adaptive model follows the form
\begin{equation}
f_t(\beta_1) = f_{t-1}(\beta_1) + \beta_1 \{[1+exp(G(y_{t-1}-f_{t-1}(\beta_1))))]^{-1}-\theta \}
\end{equation}
The adaptive model successfully accounts for increase in expected VaR. Whereas the traditional model would change only with a change in portfolio value, the adaptive model increases the Value-at-Risk by unit one whenever it is exceeded. Moreover, it decreases VaR by unit one if initial estimates proved to be too high. It is clear to see that such a conditional adjustment, in the form of a step function, would provide for a more accurate myopic estimation. However, because all changes are of magnitude one, the adaptive model overlooks large deviations in returns upwards or downwards. For example, consider a state in which the portfolio halved in value for three consecutive days. While the portfolio has been left at an eighth of its value, the VaR only increased by three units from its value at t=0. 
\subsection*{Symmetric Absolute Value}
The SAV model takes the form
\begin{equation}
f_t(\beta) = \beta_1 + \beta_2 f_{t-1} (\beta) + \beta_3 |y_{t-1}|
\end{equation}
The SAV is an autoregressive model in which a change in returns, regardless of direction, results in a change in VaR. What is particularly useful about this model is its ability to generalize movement in portfolio value. However, it is because of this very feature that SAV should not be used as a primary tool for measurement. Consider a series of large deviations in portfolio value, alternating upwards and downwards. While the long-run change in value is zero, the VaR implied by SAV would be unrealistically high. Similarly, a series of small deviations would imply an unrealistically low VaR. 
\subsection*{Asymmetric Slope}
The AS model takes the form
\begin{equation}
f_t(\beta) = \beta_1 + \beta_2 f_{t-1} (\beta) + \beta_3 (y_{t-1})^{+} + \beta_4(y_{t-1})^{-}
\end{equation}
The AS model is intended to capture the asymmetric leverage effect. Specifically, it was designed to detect the tendency for volatility to be greater following a negative return than a positive return of equal magnitude. The model relies on magnitude of error, rather than squared error, as in GARCH.
\subsection*{Indirect GARCH(1,1)}
The Indirect GARCH model takes the form
\begin{equation}
f_t(\beta) = (\beta_1 + \beta_2 f^2_{t-1} (\beta) + \beta_3 y^2_{t-1})^{\frac{1}{2}}
\end{equation}
While the GARCH model is estimated by maximum likelihood, Indirect GARCH is estimated via quantile regression. 
\section*{Regression Quantiles}
Thus far, we have understood the general form of a Conditional Autoregressive Value at Risk model, and have also seen several possible forms. The primary difference between any pair of CAViaR models is the organization and treatment of $\beta_i$, the regressive parameter. In the case of SAV, we were interested in a $\beta$ that reflected magnitude of change in portfolio returns, where in AS, we were interested in only extreme ends of a series of returns. However, how are the underlying parameters actually measured? Koenker and Basset (1978) introduced the notion of a sample quantile to a linear regression. Consider a sample of observations $y_1, \ldots, y_T$ generated by the linear model
\begin{equation}
y_t = x'_t \beta_0 + \epsilon \theta_t, \ \ Q_{\theta} (\epsilon \theta_t | x_t) = 0
\end{equation}
where $x_t$ is a length p vector of regressors (i.e. returns), and $Q_{\theta} (\epsilon \theta_t | x_t)$ is the $\theta$-quantile of $\epsilon \theta_t$ conditional on $x_t$. Consider the linear representation of an adaptive process: $f_t(\beta) = x_t \beta$. The $\theta$ regression quantile is $\hat{\beta}$ to satisfy the objective
\begin{equation}
min_{\beta} \frac{1}{T} \sum_{t=1}^T [\theta - I(y_t < f_t(\beta))][y_t - f_t(\beta)]
\end{equation}
Qualitatively, we adjust beta until VaR is no longer exceeded, or "hit" by a certain amount. Such a condition is satisfied when the observed indicator variables are 0. To account for the set of available information,
\begin{equation}
y_t = f(y_{t-1}, x_{t-1}, \ldots, y_1, x_1; \beta_0) + \epsilon_{t \theta}, \ \ [Q_{\theta} (\epsilon_{t \theta} | \Omega_t) = 0] = f_t(\beta^0) + \epsilon_{t \theta}, \ \ t=1,\ldots, T,
\end{equation}
\begin{theorem}[Consistency]
For generalized model (8), $\hat{\beta} \rightarrow \beta^0$ in probability, where $\hat{\beta}$ solves
\begin{equation}
min_{\beta} T^{-1} \sum_{t=1}^T \{[ \theta - I(y_t < f_t(\beta))] \times [y_t-f_t(\beta)] \}
\end{equation}
\end{theorem}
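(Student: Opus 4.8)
The plan is to treat this as a standard extremum-estimator consistency problem in the spirit of Koenker and Bassett and of the M-estimation framework of Newey and McFadden, adapted to the dependent, recursively defined regressors $f_t(\beta)$. First I would rewrite the sample criterion in (9) using the check (tick) function $\rho_\theta(u) = u[\theta - I(u<0)]$, so that, writing $u_t(\beta) = y_t - f_t(\beta)$, the objective becomes $Q_T(\beta) = T^{-1}\sum_{t=1}^T \rho_\theta(u_t(\beta))$ and $\hat{\beta} = \arg\min_\beta Q_T(\beta)$. Consistency then rests on two ingredients: (i) \emph{identification} --- the limiting population criterion is uniquely minimized at $\beta^0$; and (ii) \emph{uniform convergence} --- $Q_T(\beta)$ converges to that limit uniformly over a compact parameter set $B$. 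Given both, the classical argmin continuity argument delivers $\hat{\beta} \to \beta^0$ in probability.

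For identification, I would use the conditional-quantile restriction in (8), namely $Q_\theta(\epsilon_{t\theta}\mid\Omega_t)=0$, which says $P(y_t < f_t(\beta^0)\mid\Omega_t)=\theta$, i.e. $f_t(\beta^0)$ is the true conditional $\theta$-quantile. Writing the criterion difference $Q(\beta)-Q(\beta^0)$ and applying the Knight decomposition of the check function, each summand reduces to the conditional expectation $E\big[\int_0^{f_t(\beta)-f_t(\beta^0)} (I(u_t(\beta^0)\le s)-\theta)\,ds \mid \Omega_t\big]$, which is nonnegative and, under a conditional density bounded away from zero in a neighborhood of the quantile together with the assumption that $f_t(\beta)=f_t(\beta^0)$ almost surely only when $\beta=\beta^0$, strictly positive for $\beta\neq\beta^0$. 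This yields $\beta^0 = \arg\min_\beta Q(\beta)$ uniquely.

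The uniform convergence step is where the main work lies. Because $\{y_t\}$ is Markov rather than i.i.d. and $f_t(\beta)$ is defined \emph{recursively} through its own lags, the summands in $Q_T$ are neither independent nor a fixed function of a bounded window of data. I would therefore impose stationarity and a mixing (or ergodicity) condition on $\{y_t,x_t\}$, together with a contraction condition on the autoregressive coefficients in (1) (e.g. $\sum_i |\beta_i| < 1$) so that the recursion for $f_t(\beta)$ admits a stationary, ergodic solution with $f_t(\beta)$ continuous in $\beta$ and with a dominating moment $E\sup_{\beta\in B}|\rho_\theta(u_t(\beta))| < \infty$. With these, a generic uniform law of large numbers for stationary ergodic sequences --- the pointwise ergodic theorem plus stochastic equicontinuity, the latter secured by the $1$-Lipschitz property of $\rho_\theta$ and the Lipschitz continuity of $\beta\mapsto f_t(\beta)$ on $B$ --- gives $\sup_{\beta\in B}|Q_T(\beta)-Q(\beta)|\to 0$ in probability.

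The hard part, and the step I would spend the most care on, is controlling the recursion: I must verify that $\beta\mapsto f_t(\beta)$ and its modulus of continuity inherit enough regularity from the contraction condition to make the equicontinuity bound uniform in $t$, since a failure of the autoregressive part to be stable would destroy both the stationarity of the summands and the domination condition on which the uniform law of large numbers depends. Once uniform convergence and unique identification are in hand, I would conclude by the standard consistency theorem for extremum estimators (Newey--McFadden, Theorem 2.1): the strict positivity of $Q(\beta)-Q(\beta^0)$ away from $\beta^0$, uniform convergence, and compactness of $B$ force any minimizer $\hat{\beta}$ into every neighborhood of $\beta^0$ with probability approaching one, hence $\hat{\beta} \to \beta^0$ in probability.
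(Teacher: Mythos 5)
Your proposal is correct, and its core coincides with the paper's own argument: the Knight decomposition you invoke for identification is, once the inner integral is carried out, exactly the computation the appendix performs directly, namely $E[v_t(\beta)\mid\Omega_t]$ written as integrals of $h_t(\lambda\mid\Omega_t)$ over $[-|\delta_t(\beta)|,0]$ and $[0,|\delta_t(\beta)|]$ with $\delta_t(\beta)=f_t(\beta)-f_t(\beta^0)$, followed by the lower bound $\tfrac{1}{2}\tau^2 h_1\, I(|\delta_t(\beta)|>\tau)$ from the density-bounded-below assumption. Where you diverge is in what happens after identification, and there you supply more than the paper does: the printed proof stops at $E[V_T(\beta)]\geq \tfrac{1}{2}\tau^2 h_1 T^{-1}\sum_{t=1}^T P[|f_t(\beta)-f_t(\beta^0)|>\tau]$ and never executes the uniform law of large numbers or the argmin-continuity step; in the underlying source (Engle--Manganelli, following Weiss 1991 and a White-type extremum-estimator theorem) these are discharged via the domination assumption $|f_t(\beta)|<K(\Omega_t)$ with $E|K(\Omega_t)|<\infty$ (Assumption 5 of the appendix) and a generic ULLN, with no explicit stability analysis of the recursion. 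Your alternative --- a contraction condition $\sum_i|\beta_i|<1$ yielding a stationary ergodic solution of the recursion, then the pointwise ergodic theorem plus stochastic equicontinuity from the $1$-Lipschitz property of the check function --- is a legitimate and more self-contained route, at the price of a substantive restriction on the parameter space that the raw domination assumption avoids. One further point in your favor: your requirement that $f_t(\beta)=f_t(\beta^0)$ almost surely only when $\beta=\beta^0$ is indispensable, since without it the probability lower bound above can vanish for $\beta\neq\beta^0$; it plays the role of the identifiable-uniqueness condition in Engle--Manganelli's original assumption list, which this paper's stated Consistency Assumptions omit, so your version of the identification step is actually the complete one.
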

\begin{proof}
Please see Appendix
\end{proof}
\begin{theorem}[Asymptotic Normality]
In testing for asymptotic normality, for statistic T,
\begin{equation}
\sqrt{T} A_T^{-\frac{1}{2}} D_T (\hat{\beta} - \beta^0) \rightarrow N(0,1)
\end{equation} 
where
\begin{equation}
\begin{split}
A_T = E \left(T^{-1} \theta (1-\theta) \sum_{t=1}^T \nabla ' f_t (\beta^0) \nabla f_t (\beta^0) \right)\\
D_T = E \left(T^{-1} \sum_{t=1}^T h_t (0 | \Omega_t) \nabla ' f_t (\beta^0) \nabla f_t (\beta^0) \right)
\end{split}
\end{equation}
\end{theorem}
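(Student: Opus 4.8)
The plan is to treat $\hat\beta$ as an M-estimator defined by a non-differentiable objective and to establish a Bahadur-type linear representation $\sqrt{T}(\hat\beta-\beta^0)=D_T^{-1}T^{-1/2}\sum_{t=1}^T g_t(\beta^0)+o_p(1)$, from which the stated limit follows by applying a central limit theorem to the leading term. First I would write the (almost-everywhere) subgradient of the objective in (9). Differentiating the check function $\rho_\theta(u)=u[\theta-I(u<0)]$ through $u=y_t-f_t(\beta)$ gives the score
\begin{equation}
g_t(\beta)=\nabla f_t(\beta)\,[\theta-I(y_t<f_t(\beta))],
\end{equation}
and, up to the jump in the indicator, the estimator satisfies the first-order condition $T^{-1}\sum_{t=1}^T g_t(\hat\beta)=o_p(T^{-1/2})$.

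The second step exploits the martingale structure of the score at the truth. Because (8) imposes $Q_\theta(\epsilon_{t\theta}\mid\Omega_t)=0$, we have $P(y_t<f_t(\beta^0)\mid\Omega_t)=\theta$, so $E[\theta-I(y_t<f_t(\beta^0))\mid\Omega_t]=0$. Since $\nabla f_t(\beta^0)$ is $\Omega_t$-measurable, $\{g_t(\beta^0)\}$ is a martingale difference sequence, and its conditional second moment is
\begin{equation}
E[g_t(\beta^0)'g_t(\beta^0)\mid\Omega_t]=\theta(1-\theta)\,\nabla'f_t(\beta^0)\nabla f_t(\beta^0),
\end{equation}
because $\theta-I(y_t<f_t(\beta^0))$ is a centered Bernoulli with variance $\theta(1-\theta)$. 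Averaging and taking expectations reproduces exactly the $A_T$ in the statement, and a martingale central limit theorem (valid under the mixing/near-epoch-dependence conditions that the autoregressive $f_t$ is assumed to satisfy) yields $A_T^{-1/2}T^{-1/2}\sum_{t=1}^T g_t(\beta^0)\to N(0,1)$.

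The third step identifies $D_T$ as the Jacobian of the population score. Although $g_t$ is discontinuous in $\beta$, its conditional mean is smooth: with $H_t$ the conditional c.d.f. of $\epsilon_{t\theta}$, a change of variable gives $E[g_t(\beta)\mid\Omega_t]=\nabla f_t(\beta)[\theta-H_t(f_t(\beta)-f_t(\beta^0)\mid\Omega_t)]$. Differentiating at $\beta^0$, where the bracket vanishes, and using $h_t(0\mid\Omega_t)=H_t'(0\mid\Omega_t)$ produces the term $h_t(0\mid\Omega_t)\nabla'f_t(\beta^0)\nabla f_t(\beta^0)$, whose average is $D_T$. Substituting the linearization $T^{-1/2}\sum_t g_t(\hat\beta)\approx T^{-1/2}\sum_t g_t(\beta^0)-D_T\sqrt{T}(\hat\beta-\beta^0)$ into the first-order condition and inverting $D_T$ delivers the representation, whence $\sqrt{T}A_T^{-1/2}D_T(\hat\beta-\beta^0)\to N(0,1)$.

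The hard part will be justifying the linearization in the third step, since the indicator makes $g_t$ non-smooth while the summands are serially dependent. I would control this through a stochastic-equicontinuity argument: show that the centered empirical process $T^{-1/2}\sum_t\{g_t(\beta)-g_t(\beta^0)-E[g_t(\beta)-g_t(\beta^0)\mid\Omega_t]\}$ is $o_p(1)$ uniformly over shrinking neighborhoods of $\beta^0$, so that only the smooth conditional-mean part contributes in the limit. In the i.i.d. case this follows from Pollard-type chaining; here it requires the time-series analogues under the assumed moment and dependence restrictions on $\{y_t,x_t\}$ and suitable boundedness of $h_t(\cdot\mid\Omega_t)$ near zero. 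Consistency from Theorem 1 confines $\hat\beta$ to these neighborhoods, which closes the argument.
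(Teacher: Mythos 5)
You cannot be faulted for not matching the paper's proof, because the paper does not contain one: the proof environment for this theorem says only that it is left as an exercise, and the appendix supplies just a list of ``Asymptotic Normality Assumptions'' with no argument attached. The relevant benchmark is therefore the original source, Engle and Manganelli (2004), whose proof proceeds exactly along the lines you propose: it verifies the conditions of Huber's (1967) asymptotic-normality theorem as extended to dependent data by Weiss (1991), which amounts to (i) an approximate first-order condition for the score $g_t(\beta)=\nabla' f_t(\beta)\,[\theta - I(y_t<f_t(\beta))]$, (ii) a martingale-difference CLT at $\beta^0$ with conditional variance $\theta(1-\theta)\nabla' f_t(\beta^0)\nabla f_t(\beta^0)$ yielding $A_T$, and (iii) identification of $D_T$ as the derivative of the smoothed (conditional-mean) score, with a stochastic-equicontinuity step bridging the non-smooth empirical score and its smooth population version. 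Your three steps and your diagnosis of where the real work lies coincide with that structure, so the proposal is correct in outline and essentially the canonical route.

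Three caveats are worth making explicit. First, your approximate first-order condition $T^{-1}\sum_t g_t(\hat\beta)=o_p(T^{-1/2})$ is itself nontrivial here: the recursive CAViaR specification makes $f_t(\beta)$, hence the objective, non-convex in $\beta$, so convexity-based arguments (Pollard-style) are unavailable and this condition must be established separately, as Weiss does via a Huber-type lemma; it should not be waved through as ``up to the jump in the indicator.'' Second, inverting $D_T$ and $A_T^{-1/2}$ requires the smallest eigenvalues of both matrices to be bounded away from zero uniformly in $T$ --- a condition present in Engle--Manganelli but absent from the assumption list reproduced in this paper, so you should state it. Third, the martingale CLT needs a Lindeberg-type condition, which is where the third-moment bound $E(F(\Omega_t)^3)\leq F_0$ on $\|\nabla f_t\|$ enters; your appeal to ``mixing/near-epoch-dependence'' is not quite how the cited framework closes this step. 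Also, a purely notational point: with $g_t$ a column vector the conditional second moment is $E[g_t(\beta^0)g_t'(\beta^0)\mid\Omega_t]$, an outer product, not the scalar $g_t'g_t$.
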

\begin{proof}
\thanks{Proof left as exercise for reader}
\end{proof}
\section*{Testing Quantile Models}
While the expanded set of models available now account for the autocorrelation of returns, as well as large deviations, they must be tested. Given a new observation, the model remains valid if $P[y_t < f_t (\beta^0)] = \theta$. If such a condition holds for the entirety of the time series, then it is proven valid. As shown by Christoffersen (1998), such a method is equivalent to testing for the independence of indicator variables for the same condition. In other words, $[I(y_t < f_t \left (\beta^0 \right))]_{t=1}^T = \{I_t \}_{t=1}^T$. While this provides for a natural test of forecasting models, it does not fully assess the validity of quantiles. To test conditional quantile models, we introduce a representative indicator variable that changes with the quantile itself. Define a sequence of independent random variables $\{z_t \}_{t=1}^T$ such that
\begin{itemize}
\item $\mathbb P(z_t = 1) = \theta$
\item $\mathbb P(z_t = -1) = 1-\theta$
\end{itemize}
Expressing positive or negative autocorrelated returns in terms of $z_t$ indeed accounts for the probability of exceeding a quantile. However, whilst the unconditional probabilities are uncorrelated, the conditional probabilities for a hit still depend on one another. Because these tests evaluate the lower bound of the VaR in the weakest sense, we work towards defining a dynamic quantile. Let
\begin{equation}
Hit_t(\beta^0) = I( y_t < f_t(\beta^0)) - \theta
\end{equation}
$Hit_t(\beta^0)$ assumes a value of $(1-\theta)$ for underestimations of VaR, and $-\theta$ otherwise. Notice that the expected value is zero, and that there should be no autocorrelation in the values between successive hits. 
\paragraph{}
For our first test, we determine whether the test statistic, $T^{-\frac{1}{2}}X'(\hat{\beta}) Hit(\hat{\beta})$ is significantly different from zero where $X_t(\hat{\beta}), t=1, \ldots, T$ may depend on $\hat{\beta}$, and is q-measurable information (i.e. returns). Suppose we wish to test the significance of an entire set of data along several $\beta$ simultaneously. Then, let $M_T = (X'(\beta^0)) - E[T^{-1}X'(\beta^0) H \nabla f(\beta^0)] D_T^{-1} \times \nabla' f(\beta^0)$. Here, H is a diagonal matrix with binary indicators conditioned on available information. Such a test would be run both in-sample, as well as out-of-sample. We will define and prove the conditions of each.
\begin{theorem}[In-Sample Dynamic Quantile Test]
If the assumptions made (see appendix) are valid, the following holds
\begin{equation}
[\theta (1-\theta) E(T^{-1} M_T M_T')]^{-\frac{1}{2}}T^{-\frac{1}{2}} X'(\hat{\beta}) Hit(\hat{\beta}) \sim N(0,I)
\end{equation}
Moreover,
\begin{equation}
DQ = \frac{Hit(\hat{\beta})X(\hat{\beta})(\hat{M}_T \hat{M'}_T)^{-1} X' (\hat{\beta}) Hit'(\hat{\beta}) }{\theta(1-\theta)} \sim \chi^2 \, \ T \rightarrow \infty
\end{equation}
where $\hat{M_T}$ is the difference between $X'(\hat{\beta})$ and a function of the gradient of $f(\hat{\beta})$.
\end{theorem}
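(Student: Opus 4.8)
The plan is to reduce the statistic $T^{-1/2} X'(\hat{\beta}) Hit(\hat{\beta})$ to a linear functional of the sequence $\{Hit_t(\beta^0)\}$ and then invoke a central limit theorem. The first fact I would record is that $\{Hit_t(\beta^0)\}$ is a martingale difference sequence with respect to the filtration generated by $\Omega_t$: by the defining property $Q_\theta(\epsilon_{t\theta} \mid \Omega_t) = 0$ of (8), we have $E[I(y_t < f_t(\beta^0)) \mid \Omega_t] = \theta$, hence $E[Hit_t(\beta^0) \mid \Omega_t] = 0$, and moreover $\operatorname{Var}(Hit_t) = \theta(1-\theta)$ with no serial correlation. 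Since each $X_t(\beta^0)$ is $\Omega_t$-measurable, the products $X_t(\beta^0) Hit_t(\beta^0)$ inherit the martingale-difference structure, so a martingale CLT delivers $T^{-1/2} X'(\beta^0) Hit(\beta^0) \to N(0,\, \theta(1-\theta) E[T^{-1} X'(\beta^0) X(\beta^0)])$.

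The substantive work is to pass from $\beta^0$ to the estimator $\hat{\beta}$, and this is where the main obstacle lies. Because $Hit_t$ contains the indicator $I(y_t < f_t(\beta))$, it is non-differentiable in $\beta$, so a naive Taylor expansion is unavailable. My approach would be the empirical-process decomposition familiar from Huber- and Pollard-type arguments: I would write $T^{-1/2} X'(\hat{\beta}) Hit(\hat{\beta})$ as the sum of its conditional-expectation part, which is smooth in $\beta$ because the conditional density $h_t(0 \mid \Omega_t)$ exists, and a stochastic remainder. The smooth part expands as $E[T^{-1} X'(\beta^0) H \nabla f(\beta^0)]\, \sqrt{T}(\hat{\beta} - \beta^0)$, where $H$ collects the densities $h_t$ on its diagonal, while the remainder is shown to be $o_p(1)$ by a stochastic-equicontinuity estimate on the relevant function class. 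Establishing this equicontinuity, namely that the empirical fluctuation of $X_t(\beta) Hit_t(\beta)$ over a shrinking neighborhood of $\beta^0$ is asymptotically negligible, is the technically delicate step and the true crux of the argument.

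With the expansion in hand I would substitute the Bahadur-type representation of the estimator implied by Theorem 2, namely $\sqrt{T}(\hat{\beta} - \beta^0) = D_T^{-1} T^{-1/2} \nabla' f(\beta^0) Hit(\beta^0) + o_p(1)$, which arises from linearizing the first-order condition of (9). Collecting terms produces exactly the contrast $M_T = X'(\beta^0) - E[T^{-1} X'(\beta^0) H \nabla f(\beta^0)] D_T^{-1} \nabla' f(\beta^0)$, so that $T^{-1/2} X'(\hat{\beta}) Hit(\hat{\beta}) = T^{-1/2} M_T Hit(\beta^0) + o_p(1)$. A second application of the martingale CLT to this linear combination of $\{Hit_t(\beta^0)\}$ yields asymptotic normality with variance $\theta(1-\theta) E[T^{-1} M_T M_T']$; standardizing by the inverse square root of this matrix gives the first display (14).

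For the chi-squared conclusion (15) I would appeal to the standard quadratic-form fact that if $Z \sim N(0, \Sigma)$ then $Z' \Sigma^{-1} Z \sim \chi^2$ with degrees of freedom equal to $\dim Z$. I would first verify that $\hat{M}_T \hat{M}'_T$, the feasible analogue built from $\hat{\beta}$, is a consistent estimator of $E[T^{-1} M_T M_T']$ using the consistency of Theorem 1 together with a uniform law of large numbers, and then invoke the continuous mapping theorem and Slutsky's lemma to replace the population variance by its estimator without disturbing the limiting distribution. The degrees of freedom are then the dimension of $X_t$, completing the proof.
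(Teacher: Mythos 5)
Your proposal cannot be checked against a proof in the paper, because the paper supplies none: the proof of this theorem is literally ``left as exercise for reader,'' and the appendix proves only (part of) the consistency theorem. What you have reconstructed is, in outline, the original Engle--Manganelli argument, and it is the right one: the martingale-difference property of $Hit_t(\beta^0)$ following from $Q_\theta(\epsilon_{t\theta}\mid\Omega_t)=0$, the martingale CLT for $T^{-1/2}X'(\beta^0)Hit(\beta^0)$, the smoothing-plus-stochastic-equicontinuity treatment of the non-differentiable indicator (the Huber/Pollard/Weiss route, which is exactly what the appendix's bounded-density and Lipschitz assumptions are there to support), the Bahadur-type representation from Theorem 2 to absorb the estimation effect into the contrast $M_T$, and the Slutsky/quadratic-form step for the $\chi^2$ limit with $\dim X_t$ degrees of freedom. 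Notably, your reading of $H$ as the diagonal matrix of conditional densities $h_t(0\mid\Omega_t)$ is the correct one: the paper's text garbles this, describing $H$ as containing ``binary indicators,'' under which the derivative of the smooth part $E[Hit_t(\beta)\mid\Omega_t]$ at $\beta^0$ would not produce $E[T^{-1}X'(\beta^0)H\nabla f(\beta^0)]$ at all. Your version is what makes $M_T$ the right object.

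Two caveats. First, a sign: linearizing the first-order condition of (9) gives $\sqrt{T}(\hat{\beta}-\beta^0)=-D_T^{-1}T^{-1/2}\nabla' f(\beta^0)Hit(\beta^0)+o_p(1)$, and that minus sign is precisely what produces the subtraction in $M_T$; with the positive representation you wrote, collecting terms would yield $X'(\beta^0)+E[T^{-1}X'(\beta^0)H\nabla f(\beta^0)]D_T^{-1}\nabla' f(\beta^0)$ rather than the claimed contrast. This is a bookkeeping slip, not a conceptual one, but a referee would catch it. Second, the stochastic-equicontinuity estimate is asserted rather than proved; you correctly identify it as the crux, and under the appendix's assumptions (dominated gradients with third moments, bounded Lipschitz conditional density, and the fourth-moment condition listed under variance-covariance estimation for the ULLN that makes $\hat{M}_T\hat{M}_T'$ consistent for $E(T^{-1}M_TM_T')$) it does follow from chaining arguments of the type in Weiss (1991), but a complete proof must actually supply that bound, since it is the only step in the whole argument that is not routine.
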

\begin{proof}
\thanks{Proof left as exercise for reader}
\end{proof}

Essentially, the DQ test above tests whether or not the test statistic follows a normal distribution in the sense of an identity matrix, and whether the set of all dynamic quantiles in-sample follow a chi-squared distribution. We now shift our focus to a test statistic for dynamic quantiles out-of-sample.
\begin{theorem}[Out-of-Sample Dynamic Quantile Test]
Let $T_R$ denote the number of in-sample observations and $N_R$ denote the number of out-of-sample observations. Then
\begin{equation}
DQ = \frac{N_R^{-1} Hit' (\hat{\beta_{TR}}) [X(\hat{\beta_{TR}}) X(\hat{\beta_{TR}})]^{-1} \times X'(\hat{\beta_{TR}}) Hit' (\hat{\beta_{TR}})}{\theta (1-\theta)} \sim \chi^2
\end{equation}
\end{theorem}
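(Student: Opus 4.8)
The plan is to exploit the fact that, out-of-sample, the estimate $\hat{\beta}_{TR}$ is frozen at the value obtained on the estimation window, so the estimation-error correction that forced the matrix $\hat{M}_T$ into the in-sample statistic of Theorem 3 washes out asymptotically. First I would fix the evaluation block $t = T_R+1,\ldots,T_R+N_R$ and record that, under the null $Q_\theta(\epsilon_{t\theta}\mid\Omega_t)=0$, each $Hit_t(\beta^0)$ is a martingale difference with respect to $\Omega_t$: its conditional mean is zero, and since it equals $1-\theta$ with conditional probability $\theta$ and $-\theta$ with probability $1-\theta$, its conditional variance is exactly $\theta(1-\theta)$. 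Because $X_t(\beta^0)$ is $\Omega_t$-measurable, the products $X_t(\beta^0)\,Hit_t(\beta^0)$ again form a vector martingale difference sequence, now with conditional covariance $\theta(1-\theta)\,X_t X_t'$.

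Second, I would expand the feasible statistic about the truth by a mean-value argument,
\[
N_R^{-1/2}\,X'(\hat{\beta}_{TR})\,Hit(\hat{\beta}_{TR}) = N_R^{-1/2}\,X'(\beta^0)\,Hit(\beta^0) + G_{N_R}\,\sqrt{N_R}\,(\hat{\beta}_{TR}-\beta^0) + o_p(1),
\]
where $G_{N_R}$ gathers the derivatives of the instrument-weighted hits over the evaluation block. The decisive simplification is the factorization $\sqrt{N_R}(\hat{\beta}_{TR}-\beta^0)=\sqrt{N_R/T_R}\cdot\sqrt{T_R}(\hat{\beta}_{TR}-\beta^0)$: consistency (Theorem 1) together with the $\sqrt{T_R}$ rate delivered by the asymptotic normality of Theorem 2 gives $\sqrt{T_R}(\hat{\beta}_{TR}-\beta^0)=O_p(1)$, so under the maintained assumption that the evaluation block is asymptotically small relative to the estimation window ($N_R/T_R\to 0$) the middle term is $o_p(1)$. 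This is precisely why no $\hat{M}_T$-type correction survives out-of-sample, in contrast to Theorem 3.

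Third, having reduced the problem to $N_R^{-1/2}X'(\beta^0)\,Hit(\beta^0)$, I would invoke a martingale central limit theorem to get
\[
N_R^{-1/2}\,X'(\beta^0)\,Hit(\beta^0)\ \longrightarrow\ N\!\left(0,\ \theta(1-\theta)\,\Sigma\right),\qquad \Sigma=\lim_{N_R\to\infty}N_R^{-1}X'(\beta^0)X(\beta^0),
\]
after verifying convergence of the conditional variance and a Lindeberg condition, the latter immediate because the hits are bounded. Replacing $\Sigma$ by the sample analogue $N_R^{-1}X'(\hat{\beta}_{TR})X(\hat{\beta}_{TR})$ (consistent by a law of large numbers and the same continuity used above) and forming the studentized quadratic form, the $[X'X]^{-1}$ sandwich together with the $1/[\theta(1-\theta)]$ normalization converts the limiting $N(0,\theta(1-\theta)\Sigma)$ into a standard Gaussian quadratic form, which is $\chi^2$ with degrees of freedom equal to the number of columns of $X$.

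I expect the main obstacle to be the second step: rigorously controlling the estimation-error term requires a uniform handle on the non-smooth indicator $I(y_t<f_t(\beta))$ near $\beta^0$, so the naive mean-value expansion must be justified by a stochastic-equicontinuity or empirical-process argument that smooths the jump, exactly as in the proofs of Theorems 1 and 2. Establishing the $N_R/T_R\to 0$ regime --- or quantifying the bias the correction term contributes when that ratio stays bounded away from zero --- is where the genuine work lies; the central limit theorem and the algebra of the quadratic form in the third step are then routine.
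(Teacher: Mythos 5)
The paper contains no proof to compare against: the proof environment under this theorem holds only the remark that it is left as an exercise, and the appendix proves (part of) Theorem 1 only. Measured instead against the canonical argument this statement transcribes --- Engle and Manganelli's out-of-sample DQ theorem --- your proposal is the correct route and isolates exactly the right ideas: under the null, $Hit_t(\beta^0)$ is a bounded martingale difference with conditional variance $\theta(1-\theta)$, so $X_t(\beta^0)\,Hit_t(\beta^0)$ is a vector m.d.s.; the estimation-error term in the expansion is $O_p\bigl(\sqrt{N_R/T_R}\bigr)$ because $\hat{\beta}_{T_R}$ is frozen at the in-sample value and converges at rate $\sqrt{T_R}$; a martingale CLT plus studentization by $N_R^{-1}X'X$ then yields the $\chi^2$ quadratic form. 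Your closing caveat is also well placed: since $Hit$ involves an indicator, the ``mean-value'' step must be justified by a stochastic-equicontinuity argument of the Huber/Weiss type, machinery this paper nowhere develops.

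Three points deserve emphasis. First, the condition $N_R\to\infty$ with $N_R/T_R\to 0$ is not an optional ``maintained assumption'' but a genuine hypothesis of the theorem, which the paper's statement omits along with all others; when $N_R/T_R$ stays bounded away from zero the uncorrected statistic is \emph{not} asymptotically $\chi^2$, which is precisely why the in-sample version of Theorem 3 carries the $\hat{M}_T$ correction --- so your proof establishes a repaired statement rather than the literal one. Second, the limit should be stated as $\chi^2_q$ with $q$ the number of columns of $X$; the paper leaves the degrees of freedom unspecified, and your argument pins them down. Third, you silently corrected typographical defects in the displayed statistic (the inner matrix should be $X'(\hat{\beta}_{T_R})X(\hat{\beta}_{T_R})$, and one of the two $Hit'$ factors should be untransposed for the quadratic form to be conformable); your reading is the one under which the algebra of your third step goes through.
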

\begin{proof}
\thanks{Proof left as exercise for reader}
\end{proof}
Use of the dynamic quantile tests allows for an estimation of the independence in "hits". The ideal quantile test would be one in which all hits $(y_t < VaR_t)$ are independent. Regulators would be able to choose between different measures of VaR when evaluating a portfolio.
\section*{Empirical Results}
The historical series of portfolio returns were studied for four different regimes of CAViaR. A total of 2,553 daily prices from WDRS/CRSP for General Motors (GM), International Business Machines (IBM), and the SP 500 (SPX) were retrieved. Daily returns were computed as 100 times the difference of the log of the prices. Two sets of ranges were used: one from April 7, 1986 to April 7, 1999, and the other from April 7, 1999 to June 1, 2009. From the total set of daily prices, 2,253 were used in-sample, and the last 300 were used out-of-sample. Both $1 \%$ and $5 \%$ day-end VaR were estimated for each of the four regimes.
\paragraph{}   
The optimization was completed using MATLAB R2015 and Gurobi Optimizer under a Quasi-Newton Method. The loops for recursive quantile functions (i.e. SAV) were coded in C.
\subsection*{Optimization Methodology}
Using a random number generator, n vectors were generated, each with uniform distribution in $[0,1]$. The regression quantile function was computed, and from the n vectors, $m \subseteq n$ vectors with the lowest regression quantile criteria were selected as initial values for optimization. For each of the four regimes, we first used the simplex algorithm. Following the approximation, we used a robust quasi-Newton method to determine new optimal parameters to feed into the simplex algorithm. This process was repeated until convergence, and tolerance for the regression quantile was set to $10^{-10}$.

\subsection*{Simplex Algorithm}
The algorithm operates on linear programs in standard form to determine the existence of a feasible solution. The first step of the algorithm (Phase 1) involves the identification of an extrema as a guess. Either a basic feasible solution is found, or the feasible region is said to be empty. In the second step of the algorithm (Phase 2), the basic feasible solution from Phase 1 is used as a guess, and either an optimal basic feasible solution is found, or the solution is a line with infinite (unbounded) optimal cost.

\subsubsection*{Quasi-Newton Method}
QN methods are used to locate roots, local maxima, or local minima if the Hessian is unavailable at each step. Rather, the Hessian is updated through analyzing gradient vectors. In general, a second order approximation is used to find a function minimum. Such a taylor series is
\begin{equation}
f(x_k + \Delta x) \approx f(x_k) + \nabla f(x_k)^T \Delta x + \frac{1}{2} \Delta x^T B \Delta x
\end{equation}
for a gradient $\Delta f$ and a Hessian approximate B, the gradient of the approximation is
\begin{equation}
\Delta f(x_k + \Delta x) \approx \nabla f(x_k) + B \Delta x
\end{equation}
with root
\begin{equation}
\Delta x = -B^{-1} \nabla f(x_k)
\end{equation}
We seek the Hessian $B_{k+1} = argmin_B \| B-B_k \|_V$ where V is a positive definite matrix defining the norm
\paragraph{}

\subsection*{CAViaR Results}
We now review Conditional Autoregressive Value-at-Risk, methodology introduced by (Engle, Manganelli, 2001). First we review results from 1986-1999. Then we extend to 1999-2012. To test the significance of autoregression, we study values of $\beta_{t-1}$, as well as p-values, for three regimes. Finally, we compare results of conditional VaR to unconditional VaR., the traditional risk measurement tool. 
\begin{figure}[h!]
\centering
\includegraphics[scale=1.0]{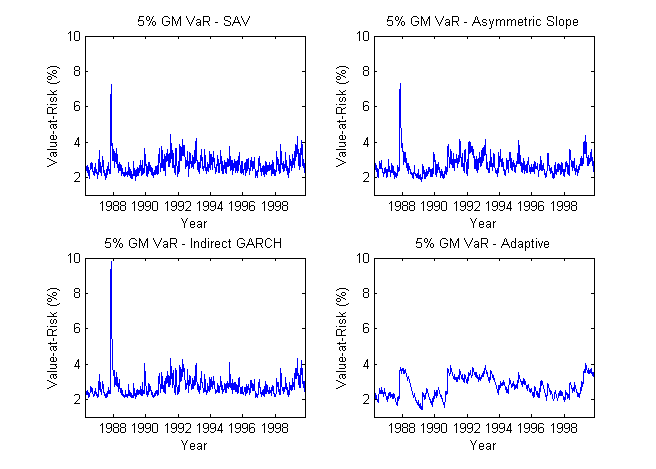}
\caption{The figure above shows $5 \%$ CAViaR plots for GM (1986-1999) under the following regimes: (a) Symmetric Absolute Value; (b) Asymmetric Slope; (c) Indirect GARCH; (d) Adaptive.} \end{figure}

\begin{figure}[h!]
\centering
\includegraphics[scale=0.8]{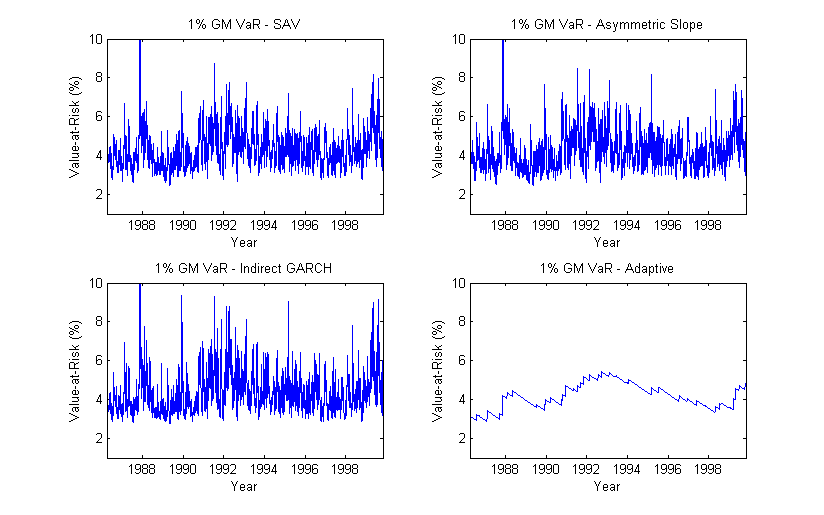}
\caption{The figure above shows $1 \%$ CAViaR plots for GM (1986-1999) under the following regimes: (a) Symmetric Absolute Value; (b) Asymmetric Slope; (c) Indirect GARCH; (d) Adaptive.} \end{figure}

\subsubsection*{Interpretation}
As previously discussed, the arrival of information creates uncertainty, and increases VaR. While this is true in both the conditional and unconditional case, it is emphasized in the former. It is well-known that the period between 1986 and 1999 was volatile, and was affected by events such as LTCM and Global crises in Russia and Asia. If VaR truly increases with the arrival of information, then it is sensible to see the peak in 1987, where all positive beta assets faced an increase in systemic risk with the market crash. The adaptive regime adjusts for changes in VaR, so the momentary increase in 1987 was given less weight.
\paragraph{}
It is also interesting to note that the autoregressive models capture non-systemic risk. While the market crash in 1987 affected all positive-beta assets, we also see less severe increases in VaR during 1999. With high probability, this is due to an idiosyncratic event--a total inventory recall by GM. Such a recall likely created uncertainty in expected cash flows, thus increasing the periodic volatility of the stock. The persistent volatility was exponentially weighted in the adaptive regime, resulting in a large increase towards the end of the period of study.

\begin{figure}[h!]
\centering
\includegraphics[scale=1.0]{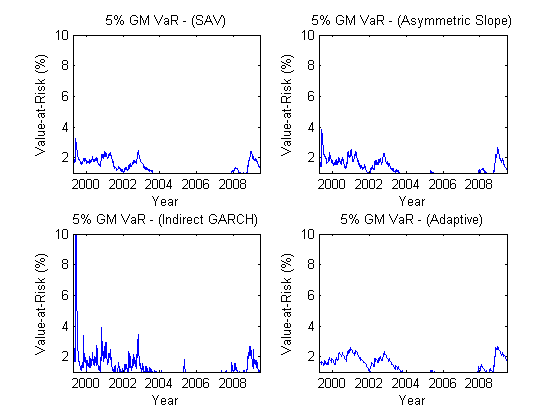}
\caption{The figure above shows $5 \%$ CAViaR plots for GM (1999-2012) under the following regimes: (a) Symmetric Absolute Value; (b) Asymmetric Slope; (c) Indirect GARCH; (d) Adaptive.}
\end{figure}

\begin{figure}[h!]
\centering
\includegraphics[scale=1.0]{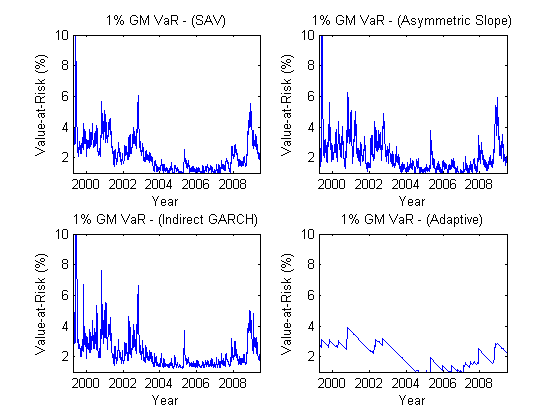}
\caption{The figure above shows $1 \%$ CAViaR plots for GM (1999-2012) under the following regimes: (a) Symmetric Absolute Value; (b) Asymmetric Slope; (c) Indirect GARCH; (d) Adaptive.}
\end{figure}

\subsubsection*{Interpretation}
Information in the 2000's was much more readily available than it was during 1980-2000. Consequently, a rapid digestion and reflection of information in asset prices may have resulted in more dynamic expectation. We see immediately that conditional VaR is much lower, indicating that the arrival of information did not induce as much uncertainty as it did in the decade prior. 
\paragraph{}
We are aware from the previous iteration that VaR increased in 1999 across all adaptive regimes. From this prior, it is sensible to see high VaR from the very first year of data, given that time is continuous. While the increased natural filtration of information suggests lower VaR, we infer sources based on backward-looking bias. Aside from the vehicle recall in 1999, the market faced a "mini-crash" in the early 2000's. This is better known as the "bubble burst". Further, the financial crisis in the 2008-2009 period resulted in an increase in conditional VaR. It is interesting to note the remarkable similarity across the four regimes, indicating an increase in the mean reversion coefficient of volatility.

%\begin{figure}[h!]
%\centering
%\includegraphics[scale=0.8]{XOM.png}
%\caption{The figure above shows $5 \%$ CAViaR plots for XOM under the following regimes: (a) Symmetric Absolute Value; (b) Asymmetric Slope; (c) Indirect GARCH; (d) Adaptive. The sharp increase in the early 2000's was a result from a price drop due to a brief economic recession, as well as tension with Russia. The sharp increase in 2009 was due to the start of the larger recession.}
%\end{figure}

\begin{figure}[h!]
\centering
\includegraphics[scale=0.7]{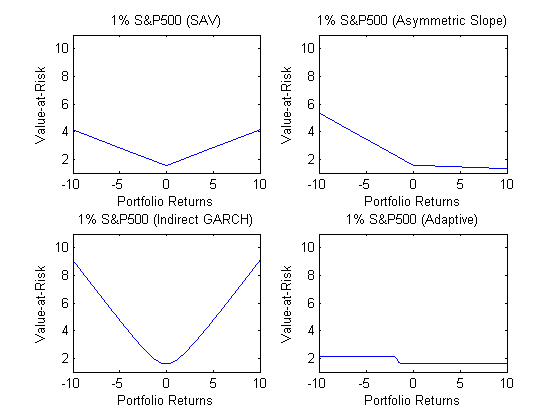}
\caption{The figure above shows the News Impact curve for $1 \%$ VaR estimates of the SP500. For $VaR_{t-1} = -2.576$, the curves show how $VaR_t$ changes with lagging portfolio returns (x). The AS curve suggests that negative returns may have a much stronger effect on the actual VaR estimate than do positive returns.}
\end{figure}

\begin{figure}[h!]
\centering
\includegraphics[scale=0.7]{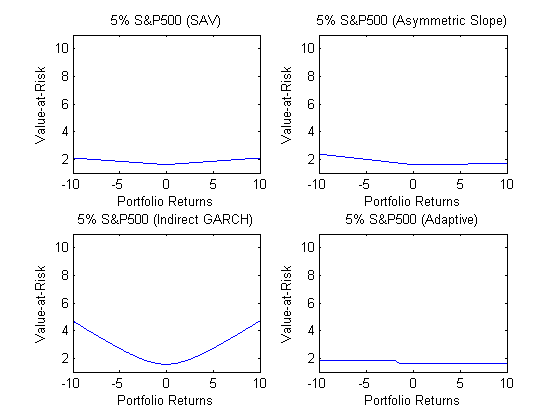}
\caption{The figure above shows the News Impact curve for $5 \%$ VaR estimates of the SP500. For $VaR_{t-1} = -1.960$, the curves show how $VaR_t$ changes with lagging portfolio returns (x). The AS curve suggests that negative returns may have a much stronger effect on the actual VaR estimate than do positive returns.}
\end{figure}

%\begin{table}[!htb]
%\centering
%\begin{tabular}{@{}lllll@{}}
%\toprule
%               & AS     & SAV    & Indirect GARCH & Adaptive \\ \midrule
%$\beta_2, GM$  & 0.9160 & 0.9250 & 0.858          & -        \\
%p-value        & 0.0252 & 0.0186 & 0.0260         & -        \\
%$\beta_2, IBM$ & 0.9178 & 0.9023 & 0.9068         & -        \\
%p-value        & 0.0665 & 0.0582 & 0.0531         & -        \\
%$\beta_2, SPX$ & 0.9164 & 0.9252 & 0.8578         & -        \\
%p-value        & 0.0252 & 0.0186 & 0.026          & -        \\ \bottomrule
%\end{tabular}
%\end{table}
%
%The table above shows the regressive parameter, $\beta_2$ for each asset under each of the three CAViaR regimes. With the exception of IBM by a slight margin, all three assets suggest significant autoregressive VaR at the $95 \%$ level for  (a) Asymmetric Slope and (b) Symmetric Absolute Value, and (c) Indirect GARCH

\begin{figure}[!htb]
\centering
\includegraphics[scale=0.6]{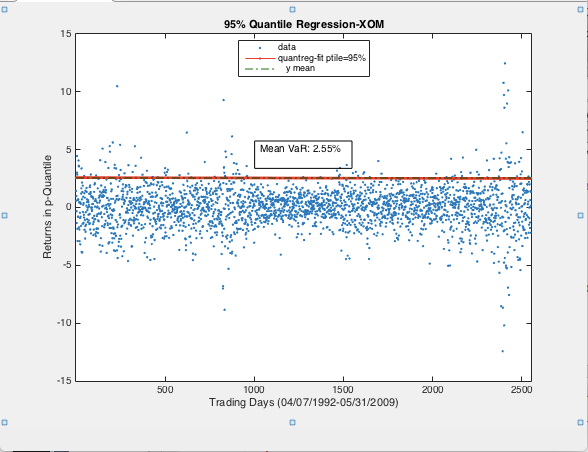}
\caption{The figure above shows the 95 $\%$ VaR for Exxon Mobil calculated through unconditional quantile regression.}
\end{figure}

\begin{figure}[!htb]
\centering
\includegraphics[scale=0.6]{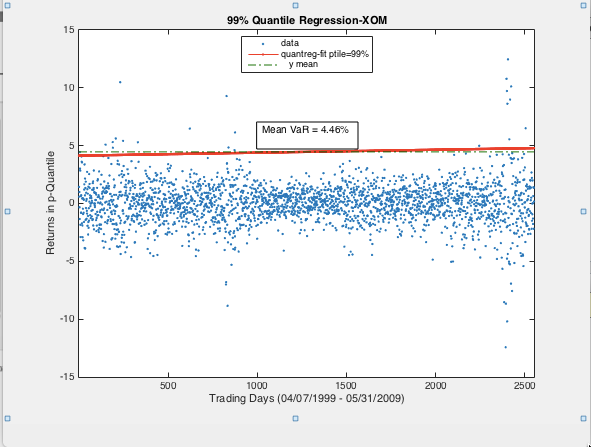}
\caption{The figure above shows the 99 $\%$ VaR for Exxon Mobil calculated through unconditional quantile regression. The VaR for quantile $p=0.99$ is higher than the VaR for quantile $p=0.95$ both in mean and in distribution. This is because the maximum loss must be determined with more certainty.}
\end{figure}

\begin{table}[!htb]
\centering
\begin{tabular}{@{}lllll@{}}
\toprule
               & AS     & SAV    & Indirect GARCH  \\ \midrule
$\beta_2, GM$  & 0.9160 & 0.9250 & 0.858               \\
p-value        & 0.0252 & 0.0186 & 0.0260                 \\
$\beta_2, IBM$ & 0.9178 & 0.9023 & 0.9068                 \\
p-value        & 0.0665 & 0.0582 & 0.0531               \\
$\beta_2, SPX$ & 0.9164 & 0.9252 & 0.8578                \\
p-value        & 0.0252 & 0.0186 & 0.026                 \\ \bottomrule
\end{tabular}
\caption{The table above shows the regressive parameter, $\beta_2$ for each asset under each of the three CAViaR regimes. With the exception of IBM by a slight margin, all three assets suggest significant autoregressive VaR at the $95 \%$ level for  (a) Asymmetric Slope and (b) Symmetric Absolute Value, and (c) Indirect GARCH}
\end{table}

%The table above shows the regressive parameter, $\beta_2$ for each asset under each of the three CAViaR regimes. With the exception of IBM by a slight margin, all three assets suggest significant autoregressive VaR at the $95 \%$ level for  (a) Asymmetric Slope and (b) Symmetric Absolute Value, and (c) Indirect GARCH
\pagebreak
\subsection*{Broader Interpretation}
The arrival of news results in an expansion of information available. If the market is truly efficient, asset prices will reflect the expectations of those exposed to this information (Fama, 1997). However, expectations are not always dynamic, and integration of information may not be continuous. Consequently, volatility, a form of uncertainty, on a particular day will be autoregressive, and depend on volatility from previous days. It becomes necessary to use autoregression when calculating the expected loss within a $p-$quantile.
\paragraph{}
We demonstrate several adaptive models, including Symmetric Absolute Value, Asymmetric Slope, and Indirect GARCH (1,1). We recognize that while all of these models satisfy the requirements of autoregression, they differ in their treatment of $\beta_t$, the autoregressive parameter. When evaluating Value-at-Risk, conditional on information, we must carefully choose the model, and understand the underlying assumptions.
\paragraph{}
It is well known that for a monotonically increasing cumulative distribution function, an increase in $p$ will cover a larger portion of the distribution of risk. An immediate consequence of this is that the  $99 \%$ VaR exceeds the $95 \%$. We re-confirm the notion. We show that conditioning on the arrival of information may increase or decrease VaR, depending on the change in expectation. We contrast conditional against unconditional VaR, and autoregressive against independent VaR. In both contrasts, the former is a more proper treatment.   

\section*{Conclusion}
We have demonstrated the autoregressive properties of VaR conditioned on information available. Existing methods for calculating VaR fail to relax several assumptions including (i) the i.i.d. behavior of returns and (ii) the large deviations at the tail of the return distribution. CAViaR addresses both issues, and suggests that there is significant autocorrelation for the data used (GM, IBM, SPX 1999-2009). Regressive parameters $(\beta_i)$ were estimated by minimizing the regression quantile loss function, and the models were tested via dynamic quantiles. 
\paragraph{}
The worst performing method was the adaptive method, which failed to detect poor returns in 1999. The best performing method was Asymmetric Slope, which captured the larger effect of negative returns vs. positive returns on VaR. Symmetric Absolute Value (SAV), which does not segregate positive and negative returns, illustrated general portfolio movement. Indirect GARCH, which does not distinguish tails of distributions, overemphasized movement in 1999, 2001, and 2009. These very same results are reflected in the 1 $\%$ SPX News curves, which study the effect of news-driven returns on the SP 500.
\paragraph{}
Standard, non-conditional quantile regressions were studied for XOM, a large-cap stock similar to GM. Without autoregressive properties, 95 $\%$ VaR was underestimated relative to the CAViaR case. Such a result motivates the use of multiple risk measurement tools, each carrying different treatment of $(\beta_i)$ and underlying assumptions. 
\paragraph{}
 The most typical use of VaR involves determining the expected loss with 95 $\%$ certainty. While useful as a bare approximation, a more proper analysis of risk should be carried across various quantiles and multiple distributions. Moreover, the loss function for a given day should be clustered within time frames. We have re-confirmed that volatility is autoregressive and conditional on the arrival of information. Treating value at risk independently will almost surely underestimate maximum losses during periods of high volatility and overestimate during periods of low volatility. The use of multiple tools may be beneficial to financial institutions, the individuals they may represent, and the health of market participants as a whole. 
\subsection*{Extensions}
 Future applications include smaller time frames, different sets of stocks, an extension to the multivariate case, and quantile sensitivity. It is also worth investigating the implied volatility of out-of-the-money options calculated under the conditional and non-conditional regime. We expect the skew to carry more weight in the conditional regime to account for disaster risk. 
 \paragraph{}
 There also exists potential to improve the methodology used within the paper. Namely,
 \begin{itemize}
 \item Cost Regularization of Autoregressive parameters 
 \item Local inference of Hessian matrices to employ interior point methods for optimization
 \end{itemize}
 
 \pagebreak
\section*{Appendix}
\subsection*{Proofs}

\subsubsection*{Consistency}
\begin{proof}
Let $Q_T(\beta) = T^{-1} \sum_{t=1}^T q_t (\beta)$, where $q_t (\beta) = [\theta - I(y_t< f_t )\beta))][y_t - f_t (\beta)]$\\
{\bf Claim:} $E[q_t(\beta)]$ exists and is finite for every $\beta.$ This can be easily checked as follows:
\begin{equation}
\begin{split} 
&E[q_t(\beta)] < E|y_t - f_t(\beta) | \leq E| \epsilon_{t \theta} | + E |f_t(\beta)| + E|f_t(\beta^0)| < \infty\\
& \textit{Because f is continuous in $\beta$ (complete probability space), $q_t(\beta)$ must be continuous}\\
& E[V_T (\beta)] = E[Q_T(\beta) - Q_T(\beta^0)] \text{is uniquely minimized at $\beta^0$ for T sufficiently large}\\
& \text{Let } v_t (\beta) = q_t (\beta) - q_t (\beta^0), q_t(\beta) = [\theta - I(\epsilon_{t \epsilon} < \delta_t (\beta))][\epsilon_{t \epsilon} - \delta_t (\beta)] \text{for} \delta_t(\beta) = f_t(\beta) -f_t(\beta^0)\\
& E[v_t(\beta) | \Omega_t] = I(\delta_t(\beta) < 0) \int_{-|\delta_t(\beta)|}^{0} (\lambda + |\delta_t (\beta)|) h_t(\lambda | \Omega_t) d \lambda + I(\delta_t (\beta) >0) \int_0^{|\delta_t(\beta)|} (|\delta_t (\beta) - \lambda) h_t(\lambda | \Omega_t) d\lambda\\
&\text{We have $h_t (\lambda | \Omega_t) > h_t > 0$ whenever $| \lambda | < h_t$. Hence, for $0<\tau<h_1$}\\
& E[v_t (\beta) | \Omega_t] \geq I(\delta_t (\beta) < -\tau) \int_{- \tau}^0 [\lambda + \tau] h_1 d\lambda + I(\delta_t (\beta) > \tau) \int_{0}^{\tau} [-\lambda + \tau] h_1 d\lambda\\
& = \frac{1}{2} \tau^2 h_1 I( |\delta_t (\beta) | > \tau)\\
& \text{We take the unconditional expectation, and see } E[V_T (\beta)] = E[T^{-1} \sum_{t=1}^R v_t (\beta)] \geq \frac{1}{2} \tau^2 h_1 T^{-1} \sum_{t=1}^T P[ |f_t (\beta) - f_t (\beta^0) > \tau]
\end{split}
\end{equation}
\end{proof}

\subsection*{Consistency Assumptions}
\begin{enumerate}
\item $(\Omega, F, P)$ is a complete probability space, and $\{\epsilon_{t \theta}, x_t \}, t=1,2, \ldots$, are random vectors on this space.
\item The function $f_t (\beta): \mathcal R^{k_t} \times B \rightarrow \mathcal R$ is such that for each $\beta \in B$, a compact subset of $\mathcal R^p, f_t (\beta)$ is measurable with respect to the information set $\Omega_t$ and $f_t (.)$ is continuous in B, $t=1, 2, \ldots,$ for a given choice of explanatory variables $(y_{t-1}, x_{t-1}, \ldots, y_1, x_t \}.$
\item Conditional on all of the past information $\Omega_t$, the error terms $\epsilon_{t \theta}$ form a stationary process, with continuous conditional density $h_t (\epsilon | \Omega_t)$. 
\item There exists $h>0$ such that for all t, $h_t( 0| \Omega_t ) \geq h$.
\item $|f_t (\beta)| < K(\Omega_t)$ for each $\beta \in B$ and for all t, where $K(\Omega_t)$ is some (possibly) stochastic function of variables that belong to the information set, such that $E(|K(\Omega)|) \leq K_0 < \infty$ for some constant $K_0$.
\item $E[|\epsilon_{t \theta}|] <\infty$ for all t
\end{enumerate}

\subsection*{Asymptotic Normality Assumptions}
\begin{enumerate}
\item $f_t (\beta)$ is differentiable in B and for all $\beta$ and $\gamma$ in a neighborhood $v+0$ of $\beta^0$ such that $\| \beta - \gamma \| \leq d$ for $d$ sufficiently small and for all t:
\begin{enumerate}
\item $\| \nabla f_t (\beta) \| \leq F(\Omega_t)$, where $F(\Omega_t)$ is some (possibly) stochastic function of variables that belong to the information set and $E(F(\Omega_t)^3) \leq F_0 < \infty$, for some constant $F_0$.
\item $\| \nabla_t (\beta) - \nabla f_t(\gamma) \| \leq M(\Omega_t, \beta, \gamma) = O(\| \beta - \gamma \|),$ where $M(\Omega_t, \beta, \gamma)$ is some function such that $E[M(\Omega_t, \beta, \gamma)]^2 \leq M_0 \| \beta - \gamma \| < \infty$ and $E[M(\Omega_t, \beta, \gamma) F(\Omega_t)] \leq M_1 \| \beta - \gamma \| < \infty$ for some constants $M_0$ and $M_1$. 
\end{enumerate}
\item  $h_t(\epsilon | \Omega_t) \leq N < \infty \forall t$, for some constant N
\item $h(\epsilon | \Omega_t)$ satisfies the Lipschitz condition $|h_t (\lambda_1 | \Omega_t) -h_t(\lambda_2 | \Omega_t) | \leq L|\lambda_1 -\lambda_2 |$ for some constant $L <\infty \forall t$
\end{enumerate} 

\subsection*{Variance-Covariance Matrix Estimation Assumptions}
\begin{enumerate}
\item $\hat{c}_T/c_T \rightarrow 1$ in probability, where the non stochastic positive sequence $c_T$ satisfies $C_T = o(1)$ and $c^{-1}_T = o(T^{\frac{1}{2}})$
\item $E(|F(\Omega_t)|^4) \leq F_1 < \infty$ for all t and for some constant $F_1$, where $F(\Omega_t)$ has been defined under asymptotic normality
\item The difference between the DQ test and representative $A_T, D_T$ converge in probability to 0
\end{enumerate}

\pagebreak

\end{document}